\documentclass[a4paper,10pt]{article}%

\usepackage[longnamesfirst]{natbib}
\bibliographystyle{agsm}

\usepackage{enumerate}
\usepackage{amsfonts}
\usepackage{amsthm}
\usepackage{amsmath}
\usepackage{needspace}
\usepackage{xcolor}

\usepackage[colorlinks=true,citecolor=black, linkcolor=black, bookmarks={false}, pdfpagelayout={SinglePage}  ]{hyperref}

\usepackage{tabularx}
\usepackage{tikz}
\usepackage{amssymb}
\usepackage{graphicx}%
\setcounter{MaxMatrixCols}{30}

\DeclareMathOperator{\conv}{conv}
\DeclareMathOperator{\dom}{dom}
\DeclareMathOperator{\successors}{succ}
\DeclareMathOperator{\cone}{cone}
\DeclareMathOperator{\interior}{int}

\DeclareMathOperator{\epi}{epi}

\newtheorem{theorem}{Theorem}

\newtheorem{proposition}[theorem]{Proposition}

\newtheorem{remark}[theorem]{Remark}
\newtheorem{example}[theorem]{Example}
\numberwithin{equation}{section}
\numberwithin{theorem}{section}

\newlength{\figurewidth}\setlength{\figurewidth}{0.75\textwidth}
\newlength{\figureheight}\setlength{\figureheight}{0.75\figurewidth}
\tikzstyle{axis}=[thin]
\tikzstyle{function}=[thick]
\usetikzlibrary{plotmarks}

\usepackage{dcolumn}
\newcolumntype{d}{@{}D{.}{.}{3}@{}}

\setlength{\marginparwidth}{1.5in}
\setlength{\marginparsep}{0.15in}
\let\marginparnew=\marginpar
\long\def\marginpar#1{\marginparnew{\tiny #1}}
%\long\def\marginpar#1{\marginparnew{\small #1}}

\begin{document}

\title{Linear vector optimization and European option pricing under proportional transaction costs}
\author{Alet Roux\thanks{Department of Mathematics, University of York, Heslington,
YO105DD, United Kingdom. Email: alet.roux@york.ac.uk.}
\and Tomasz Zastawniak\thanks{Department of Mathematics, University of York,
Heslington, YO105DD, United Kingdom. Email: tomasz.zastawniak@york.ac.uk.}}
\maketitle

\begin{abstract}%
A method for pricing and superhedging European options under proportional
transaction costs based on linear vector optimisation and geometric duality
developed by \cite{Loehne_Rudloff2014} is compared to a special case of the
algorithms for American type derivatives due to \cite{Roux_Zastawniak2014}. An
equivalence between these two approaches is established by means of a general
result linking the support function of the upper image of a linear vector
optimisation problem with the lower image of the dual linear optimisation
problem.%
\end{abstract}

\section{Introduction}

We compare two existing methods for the computational pricing and superhedging
of European options in the presence of proportional transaction costs, and
investigate the relationships between them, highlighting their similarities,
differences and relative strengths. One of these methods, based on the primal
and dual constructions stated in Section~\ref{Sect:primal-dual-constr}, goes
back to \cite{Roux_Tokarz_Zastawniak2008} and \cite{Roux_Zastawniak2014},
where it was developed for the much more general class of American type
derivative securities, of which European options are a special case. The other
method, which relies on linear vector optimisation and geometric duality, was
proposed by \cite{Loehne_Rudloff2014} and named the SHP-algorithm by them; see
Section~\ref{Sect:SHP-alg}.

As a by-product, we prove a general result establishing one-to-one
correspondence between the support function of the upper image of a linear
vector optimisation problem on the one hand, and the lower image of the dual
linear vector optimisation problem on the other hand; see
Proposition~\ref{Prop1-ver2}. This result provides a link between the two
methods for pricing and superhedging European options, and it is also
interesting in its own right.

We work within the general model of a currency exchange market of \cite{kabanov1999}, with proportional transaction costs included in the form
of exchange rate bid ask spreads. This model has been extensively studied, for
example, by \cite{kabanov_stricker2001b}, \cite{kabanov_rasonyi_stricker2002}
and \cite{schachermayer2004}.

All three algorithms, the primal construction, the dual construction and the
SHP-algorithm lend themselves well to computer implementation. For the primal
and dual constructions this has been done by \cite{Roux_Zastawniak2014} with
the aid of the \emph{Maple} package \emph{Convex} developed by
\cite{Franz2009}. To implement the SHP-algorithm \cite{Loehne_Rudloff2014}
used Benson's linear vector optimisation technique; see \cite{Benson1998},
\cite{Hamel_Lohne_Rudloff2013}. We illustrate the results by a numerical
example computed by means of the primal and dual constructions and compare this
with a similar example presented by \cite{Loehne_Rudloff2014}, who employed
the SHP-algorithm.

We conclude by suggesting a possible extension of the SHP-algorithm to
hedge and price the seller's (short) position in an American option, and
pointing out an inherent difficulty in hedging and pricing the buyer's
(long) position in an American option due to the
essential non-convexity of the problem.

\section{A general duality result}

In this section we present a simple observation that links support functions
with duality in linear vector optimization. The related work of \cite{Luc2011} provides further insight on the connection between support functions and duality. This result will prove useful in comparing the various pricing and hedging algorithms in the following sections.

For a cone $C\subseteq\mathbb{R}^{q}$ we
define a partial ordering $\leq_{C}$ on~$\mathbb{R}^{q}$ by
\[
y\leq_{C}z\iff z-y\in C
\]
and denote by $C^{+}$ the dual (or positive polar) cone of $C$, i.e.
\[
C^{+}=\left\{  x\in\mathbb{R}^{q}:x^{T}y\geq0~\forall y\in C\right\}  .
\]

In what follows we assume that $C$ is a polyhedral cone with non-empty
interior, and there exists some $c\in\interior C$ with $c_{q}=1$.
Suppose that matrices $P\in\mathbb{R}^{q\times d}$ and $B\in\mathbb{R}%
^{m\times d}$ and a vector $b\in\mathbb{R}^{m}$ are given, and consider the
linear vector optimization problem
\begin{equation}
\label{eq:P}\text{minimize }Px \text{ with respect to } \le_{C}\text{ over }
x\in S, \tag{P}%
\end{equation}
with feasible set
\[
S=\{x\in\mathbb{R}^{d}:Bx\ge b\}.
\]
The \emph{upper image} of problem \eqref{eq:P} is the set
\[
\mathcal{P}=P[S] + C.
\]

The dual problem to \eqref{eq:P} is
\begin{equation}
\text{maximize }D^{\ast}(u,w)\text{ with respect to }\leq_{K}\text{ over
}(u,w)\in T, \tag{D$^\ast$}\label{eq:D*}%
\end{equation}
where the linear operator $D^{\ast}:\mathbb{R}^{m}\times\mathbb{R}%
^{q}\rightarrow\mathbb{R}^{q}$ is defined as
\[
D^{\ast}(u,w)=(w_{1},\ldots,w_{q-1},b^{T}u)^{T}\text{ for }(u,w)\in
\mathbb{R}^{m}\times\mathbb{R}^{q},
\]
with $K=\cone\{e^{q}\}$ for $e^{q}=(0,\ldots,0,1)\in\mathbb{R}^{q}$, and with%
\[
T=\{(u,w)\in\mathbb{R}^{m}\times\mathbb{R}^{q}:u\geq0,B^{T}u=P^{T}%
w,c^{T}w=1,w\in C^{+}\}.
\]
The \emph{lower image} of problem \eqref{eq:D*} is the set
\[
\mathcal{D}^{\ast}=D^{\ast}[T]-K.
\]

We now state and prove a general result that links the lower
image~$\mathcal{D}^{\ast}$ of \eqref{eq:D*} with the support function of
$-\mathcal{P}$, where $\mathcal{P}$ is the upper image of \eqref{eq:P}. The
support function $Z:\mathbb{R}^{q}\rightarrow\mathbb{R}$ of $-\mathcal{P}$ is
defined as \cite[see e.g.][p.~28]{rockafellar1996}
\[
Z(x) = \sup\left\{  x^{T}z:z\in-\mathcal{P}\right\}  \text{ for all }%
x\in\mathbb{R}^{q}.
\]
Note that $Z(x)$ is the negative of a scalarization of $\mathcal{P}$ with respect to the weighting vector $x$ \citep[see e.g.][Section~4.1.1]{Lohne2011}. Thus the following result can be regarded as a reformulation of strong geometric duality \citep[see][Theorems~4.40,~4.41]{Lohne2011} by means of the family of scalarizations of~$\mathcal{P}$.

\begin{proposition}
\label{Prop1-ver2}If $C$ contains no lines, i.e.\ if $C\cap(-C)=\{0\}$, then%
\begin{align}
\mathcal{D}^{\ast}  &  =\left\{  w\in\mathbb{R}^{q}:-w_{q}\geq Z\left(
w_{1},\ldots w_{q-1},1-\sum_{i=1}^{q-1}c_{i}w_{i}\right)  \right\}
,\label{Eq:Prop1-1}\\
Z(w)  &  =\left\{
\begin{array}
[c]{ll}%
-\sup\left\{  y\in\mathbb{R}:\frac{1}{c^{T}w}\left(  w_{1},\ldots
,w_{q-1},y\right)  \in\mathcal{D}^{\ast}\right\}  & \text{\upshape if }%
c^{T}w>0,\\
0 & \text{\upshape if }w=0,\\
\infty & \text{\upshape otherwise.}%
\end{array}
\right.  \label{Eq:Prop1-2}%
\end{align}
\end{proposition}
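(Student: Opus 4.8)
The plan is to obtain both identities from a single explicit computation of the support function $Z$ via linear programming duality, and then to read off the two set equalities. First I would evaluate $Z$ on $C^{+}$ versus its complement. Writing $-\mathcal{P}=-P[S]-C$ gives $Z(x)=\sup_{x'\in S}(-x^{T}Px')+\sup_{c'\in C}(-x^{T}c')$, where the second supremum equals $0$ when $x\in C^{+}$ (attained at $c'=0$) and $+\infty$ otherwise. Hence $Z(x)=+\infty$ for $x\notin C^{+}$, whereas for $x\in C^{+}$ strong LP duality applied to the program $\inf\{(P^{T}x)^{T}x':Bx'\ge b\}$ gives
\[
Z(x)=-\inf\{x^{T}Px':Bx'\ge b\}=-\sup\{b^{T}u:u\ge 0,\ B^{T}u=P^{T}x\},
\]
with the usual conventions for the degenerate cases (empty dual supremum $=-\infty$, corresponding to $Z(x)=+\infty$).

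Next I would unfold the lower image and match it against this formula. For a point $p\in\mathbb{R}^{q}$ write $\bar p=(p_{1},\dots,p_{q-1},1-\sum_{i=1}^{q-1}c_{i}p_{i})$, the unique vector sharing the first $q-1$ coordinates of $p$ and satisfying $c^{T}\bar p=1$ (using $c_{q}=1$). Unpacking $\mathcal{D}^{\ast}=D^{\ast}[T]-K$, a point $(p_{1},\dots,p_{q-1},s)$ lies in $\mathcal{D}^{\ast}$ precisely when there is $(u,w)\in T$ with $w_{i}=p_{i}$ for $i<q$ and $s\le b^{T}u$; but the constraints $c^{T}w=1$ and $w\in C^{+}$ then force $w=\bar p$ and $\bar p\in C^{+}$. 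Thus membership is equivalent to $\bar p\in C^{+}$ together with the existence of $u\ge 0$ satisfying $B^{T}u=P^{T}\bar p$ and $b^{T}u\ge s$, i.e.\ to $s\le-Z(\bar p)$ by the previous step. Setting $s=p_{q}$ yields \eqref{Eq:Prop1-1}. Here I would check that the cases $\bar p\notin C^{+}$, infeasible dual, and unbounded primal are all absorbed by the convention $-Z(\bar p)=-\infty$, so that \eqref{Eq:Prop1-1} is a genuine set equality rather than an equality up to boundary points.

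For \eqref{Eq:Prop1-2} I would invert \eqref{Eq:Prop1-1} using the positive homogeneity of the support function $Z$. Since $c\in\interior C$, every nonzero $w\in C^{+}$ has $c^{T}w>0$: if $c^{T}w=0$, then because a ball about $c$ lies in $C$ one obtains $w^{T}v\ge 0$ for every direction $v$, forcing $w=0$. This justifies the split by the sign of $c^{T}w$. When $c^{T}w>0$, set $\lambda=c^{T}w$ and observe, using $c_{q}=1$, that the point $v_{y}:=\tfrac{1}{\lambda}(w_{1},\dots,w_{q-1},y)$ has $\overline{v_{y}}=w/\lambda$ independently of $y$; by \eqref{Eq:Prop1-1} and homogeneity, $v_{y}\in\mathcal{D}^{\ast}$ exactly when $y\le-Z(w)$, so the supremum over admissible $y$ equals $-Z(w)$, which is the stated formula. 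The case $w=0$ gives $Z(0)=0$ since $-\mathcal{P}\ne\emptyset$, and when $c^{T}w\le 0$ with $w\ne 0$ the same interior-point argument forces $w\notin C^{+}$, hence $Z(w)=+\infty$.

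The step I expect to be most delicate is the bookkeeping of the degenerate linear programs in the second paragraph: I must ensure that primal infeasibility, primal unboundedness, and the side condition $\bar p\in C^{+}$ are all encoded by the single scalar inequality $-p_{q}\ge Z(\bar p)$, so that no boundary points are gained or lost in \eqref{Eq:Prop1-1}. The hypothesis $C\cap(-C)=\{0\}$ places the result squarely within strong geometric duality \citep[Theorems~4.40,~4.41]{Lohne2011}; in the direct argument it is felt through the interior vector $c$ and the normalization $c^{T}w=1$, which slices $C^{+}$ cleanly and renders the two formulas exact inverses of one another.
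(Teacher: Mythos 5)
Your proof is correct, and it takes a genuinely different route from the paper's. The paper never opens up the linear programs: it quotes strong geometric duality (Theorems~4.40 and~4.41 of \cite{Lohne2011}) in the form $\mathcal{D}^{\ast}=\{w\in\mathbb{R}^{q}:\varphi(y,w)\geq0~\forall y\in\mathcal{P}\}$ for the bi-affine coupling function $\varphi(y,w)=\sum_{i=1}^{q-1}y_{i}w_{i}+y_{q}\left(1-\sum_{i=1}^{q-1}c_{i}w_{i}\right)-w_{q}$, and \eqref{Eq:Prop1-1} is then a one-step rearrangement of $\varphi(y,w)\geq0$ into the support-function inequality over $-\mathcal{P}$. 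Your second paragraph re-proves exactly this content from scratch: you unfold $\mathcal{D}^{\ast}=D^{\ast}[T]-K$ definitionally and compute $Z$ explicitly by scalar LP duality as $Z(x)=-\sup\{b^{T}u:u\geq0,\ B^{T}u=P^{T}x\}$ on $C^{+}$ and $+\infty$ off $C^{+}$. For \eqref{Eq:Prop1-2} the two arguments coincide in the case $c^{T}w>0$ (positive homogeneity plus the normalization slice); for $w\neq0$ with $c^{T}w\leq0$ the paper works inside $\mathcal{P}$, choosing $\varepsilon>0$ with $c-\varepsilon w\in C$ and letting $x+\lambda(c-\varepsilon w)\in\mathcal{P}$ with $\lambda\to\infty$ drive $Z(w)$ to $\infty$, whereas you show every nonzero element of $C^{+}$ satisfies $c^{T}w>0$ and then invoke $Z=\infty$ off $C^{+}$ --- the same mechanism ($c\in\interior C$) in dual clothing. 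Your route buys self-containedness (no vector-optimization machinery), an explicit LP formula for $Z$, and, notably, it nowhere uses the pointedness hypothesis $C\cap(-C)=\{0\}$, which the paper needs only so that the cited duality theorems apply; in that respect your argument is marginally more general. What the paper's route buys is brevity and the conceptual point it explicitly wants to make, namely that the proposition \emph{is} geometric duality re-expressed through scalarizations.

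One caveat in your degenerate-case bookkeeping deserves fixing rather than just flagging: the convention ``empty dual supremum $=-\infty$, corresponding to $Z(x)=+\infty$'' is sound only under primal feasibility. If $S=\emptyset$ then $Z\equiv-\infty$, the right-hand side of \eqref{Eq:Prop1-1} becomes all of $\mathbb{R}^{q}$ while $\mathcal{D}^{\ast}$ need not be, and the identity genuinely fails; so $S\neq\emptyset$ must be a standing hypothesis. The paper assumes it just as silently, both through the cited Theorems~4.40/4.41 and in the phrase ``for any fixed $x\in\mathcal{P}$'' in its treatment of $c^{T}w\leq0$. Granted $S\neq\emptyset$, weak duality rules out an unbounded dual, finite LP optima are attained (which your step ``there exists feasible $u$ with $b^{T}u\geq s$ if and only if $s\leq\sup b^{T}u$'' tacitly uses), and all the degenerate cases land where you claim.
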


\begin{proof}
 If $C$ contains no lines, then Theorems~4.40 and~4.41 of \cite{Lohne2011} \citep[see also][Remark~3.7]{Hamel_Lohne_Rudloff2013} give
\[
\mathcal{D}^{\ast}=\left\{  w\in\mathbb{R}^{q}:\varphi(y,w)\geq0~\forall
y\in\mathcal{P}\right\},
\]
where the bi-affine coupling function $\varphi:\mathbb{R}%
^{q}\times\mathbb{R}^{q}\rightarrow\mathbb{R}$ is defined as
\[
\varphi(y,w)=\sum_{i=1}^{q-1}y_{i}w_{i}+y_{q}\left(  1-\sum_{i=1}^{q-1}%
c_{i}w_{i}\right)  -w_{q}\text{ for }(y,w)\in\mathbb{R}^{q}\times
\mathbb{R}^{q}.
\]
The function $\varphi$ was first introduced for the special case
$c=(1,\ldots,1)^{T}$ by \cite{Heyde_Lohne2008} and for general $c$ by \cite{Loehne_Rudloff2014}.

Observe that $\varphi(y,w)\geq0$ for all $y\in\mathcal{P}$ if and only if%
\[
-w_{q}\geq\sum_{i-1}^{q-1}y_{i}w_{i}+y_{q}\left(  1-\sum_{i=1}^{q-1}c_{i}%
w_{i}\right)  \quad\text{for all }y\in-\mathcal{P},
\]
that is, if and only if%
\begin{align*}
-w_{q}  &  \geq\sup\left\{  \sum_{i-1}^{q-1}y_{i}w_{i}+y_{q}\left(
1-\sum_{i=1}^{q-1}c_{i}w_{i}\right)  :y\in-\mathcal{P}\right\} \\
&  =Z\left(  w_{1},\ldots w_{q-1},1-\sum_{i=1}^{q-1}c_{i}w_{i}\right)  .
\end{align*}
This proves~\eqref{Eq:Prop1-1}.

Now take any $w\in\mathbb{R}^{d}$ such that $c^{T}w>0$. Then $-y\geq Z(w)$ is
equivalent to $-\frac{y}{c^{T}w}\geq Z\left(  \frac{w}{c^{T}w}\right)  $ since
the support function is positively homogeneous. By~(\ref{Eq:Prop1-1}), the last inequality is in turn equivalent to $\frac
{1}{c^{T}w}\left(  w_{1},\ldots,w_{q-1},y\right)  \in\mathcal{D}^{\ast}$. This
shows that%
\begin{align*}
Z(w)  &  =-\sup\left\{  y\in\mathbb{R}:-y\geq Z(w)\right\} \\
&  =-\sup\left\{  y\in\mathbb{R}:\frac{1}{c^{T}w}\left(  w_{1},\ldots
,w_{q-1},y\right)  \in\mathcal{D}^{\ast}\right\}
\end{align*}
when $c^{T}w>0$. If $w=0$, then $Z(w)=0$ by the definition of the support
function. Finally, take any $w\neq0$ such that $c^{T}w\leq0$. Since
$c\in\interior C$, there is an $\varepsilon>0$ such that $c-\varepsilon w\in
C$. It follows that $\left(  c-\varepsilon w\right)  ^{T}w=c^{T}w-\varepsilon
w^{T}w<0$ because $w^{T}w>0$. As $\mathcal{P}=\mathcal{P}+C$, for any
fixed $x\in\mathcal{P}$ and for each $\lambda>0$ we have $x+\lambda
(c-\varepsilon w)\in\mathcal{P}$. Hence, by the definition of the support
function,
\[
Z(w)\geq-\left(  x+\lambda(c-\varepsilon w)\right)  ^{T}w=-x^{T}%
w-\lambda(c-\varepsilon w)^{T}w
\]
for each $\lambda>0$. Since $(c-\varepsilon w)^{T}w<0$, this means that
$Z(w)=\infty$, completing the proof of~(\ref{Eq:Prop1-2}).
\end{proof}

\begin{remark}
\upshape According to Proposition \ref{Prop1-ver2},
\begin{equation} \label{eq:dstar-as-epigraph}
\mathcal{D}^{\ast}=\left\{  (w_{1},\ldots,w_{q-1},y)\in\mathbb{R}^{q}%
:(w,y)\in-\epi Z,c^{T}w=1\right\}  ,
\end{equation}
so $\mathcal{D}^{\ast}$ can be identified with the section of the cone $-\epi Z$ by the
hyperplane $\{(w,y)\in\mathbb{R}^{q}\times\mathbb{R}:c^{T}w=1\}$
in~$\mathbb{R}^{q+1}$. The convex set~$\mathcal{D}^{\ast}$ (which depends
on~$c$) captures the same information as the support function~$Z$. This is
remarkable given that $Z$ is independent of the arbitrary choice of~$c$. Also note the similarity between \eqref{eq:dstar-as-epigraph} and the representation by \citet[p.~828]{Heyde2013} of the dual image in a more general setting.
\end{remark}

This section concludes with a simple example.

\begin{example}\upshape
\label{ex:1} Suppose that
\[
P=%
\begin{pmatrix}
1 & -1\\
1 & \phantom-1
\end{pmatrix},\
B=%
\begin{pmatrix}
2 & 1\\
1 & 2\\
1 & 0\\
0 & 1
\end{pmatrix},\
b=\begin{pmatrix}6\\6\\0\\0\end{pmatrix},\
C=\cone\left\{  \begin{pmatrix}-3\\\phantom-1\end{pmatrix},\begin{pmatrix}1\\2\end{pmatrix}\right\}  ,
\]
and fix $c=(0,1)^T\in\interior C$. For this data we have
\begin{align*}
\mathcal{P}  &  =\{(z_{1},z_{2})\in\mathbb{R}^{2}:z_{2}\geq\tfrac{1}{3}%
z_{1}+4,z_{2}\geq z_{1},z_{2}\geq-\tfrac{1}{3}z_{1}+4\},\\
\mathcal{D}^{\ast}  &  =\{(w_{1},y)\in\mathbb{R}^{2}:-1\leq w_{1}\leq\tfrac
{1}{3},y\leq4,y-6w_{1}\leq6\}
\end{align*}
\cite[full details in][Example 6.4]{Loehne_Rudloff2011}. The
sets~$\mathcal{P}$ and~$\mathcal{D}^{\ast}$ are represented graphically in
Figure~\ref{fig:1}.
\begin{figure}[ptb]
\begin{center}%
\begin{tabular}
[c]{cc}%
\begin{tikzpicture}[x=0.5/16*\figurewidth,y=0.5/12*\figureheight] \draw[gray!50,fill=gray!50] (8,10) -- (8,8) -- (6,6) -- (0,4) -- (-8,4+8/3) -- (-8,10) -- cycle; \draw[axis,->] (-8,0) -- (8,0) node[below] {$z_1$}; \foreach \x/\xname in {-6,6} \draw (\x,0) node[below] {$\xname$} -- ++(0,1ex); \draw[axis,->] (0,-2) -- (0,10) node[above] {$z_2$}; \foreach \y/\yname in {4,6} \draw (0,\y) node[left] {$\yname$} -- ++(1ex,0); \draw[function] (6,6) -- (8,8); \draw[function] (0,4) -- (6,6); \draw[function] (0,4) -- (-8,4+8/3); \draw (-8,10) node[below right] {$\mathcal{P}$}; \end{tikzpicture} &
\begin{tikzpicture}[x=0.5/2*\figurewidth,y=0.5/12*\figureheight] \draw[gray!50,fill=gray!50] (1/3,-2) -- (1/3,4) -- (-1/3,4) -- (-1,0) -- (-1,-2) -- cycle; \draw[axis,->] (-4/3,0) -- (2/3,0) node[below] {$w_1$}; \foreach \x/\xname in {{-1/3}/-\tfrac{1}{3}\phantom-} \draw (\x,0) node[below] {$\xname$} -- ++(0,1ex); \foreach \x/\xname in {-1/-1\phantom-,{1/3}/\tfrac{1}{3}} \draw (\x,0) node[below right] {$\xname$} -- ++(0,1ex); \draw[axis,->] (0,-2) -- (0,10) node[above] {$y$}; \foreach \y/\yname in {8} \draw (0,\y) node[left] {$\yname$} -- ++(1ex,0); \foreach \y/\yname in {4} \draw (0,\y) node[above left] {$\yname$} -- ++(1ex,0); \draw[function] (-4/3,4) -- (2/3,4); \draw[function] (-4/3,-2) -- (2/3,10); \draw[function,] (-1,-2) -- (-1,10); \draw[function] (1/3,-2) -- (1/3,10); \draw (1/3,-2) node[above left] {$\mathcal{D}^\ast$}; \end{tikzpicture}
\end{tabular}
\end{center}
\caption{Upper and lower images in Example \ref{ex:1}}%
\label{fig:1}%
\end{figure}
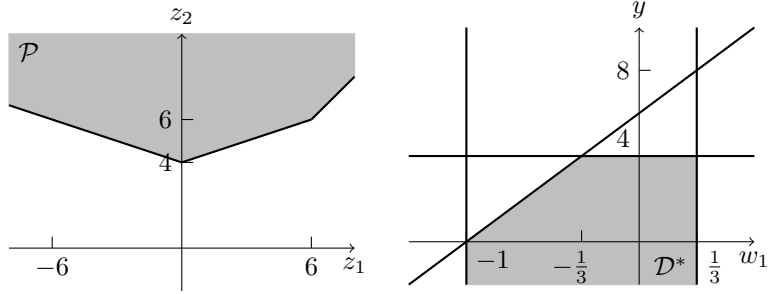

The support function $Z$ is finite on its effective domain, which consists of vectors $w\in\mathbb{R}^2$ such that $x^Tw\le 0$ for each $x\in-\mathcal{P}$, so
\[
\dom Z = \{w\in\mathbb{R}^{2}:Z(w)<\infty\} = \{(w_{1},w_{2})\in
\mathbb{R}^{2}:w_{2}\ge-w_{1},w_{2}\ge3w_{1}\}.
\]
For each $w\in\dom Z$ the linear function $x\mapsto x^Tw$ takes a maximum at one of the extreme points $(0,-4),(-6,-6)$ of the convex set $-\mathcal{P}$, hence
\[
Z(w) =\sup\{x^Tw:x\in-\mathcal{P}\} = \max\{-4w_{2},-6w_{1}-6w_{2}\}.
\]
This means that
\begin{multline*}
\{(w_{1},y)\in\mathbb{R}^{2}:(w,y)\in-\epi Z,c^{T}w=1\}\\
\begin{aligned} &=\{(w_1,y)\in\mathbb{R}^2:y\le-Z(w_1,w_2),(w_1,w_2)\in\dom Z,w_2 = 1\}\\ &=\{(w_1,y)\in\mathbb{R}^2:y\le-Z(w_1,1),-1\le w_1 \le \tfrac{1}{3}\}\\ &=\{(w_1,y)\in\mathbb{R}^2:y\le 4, y\le 6w_1 + 6,-1\le w_1 \le \tfrac{1}{3}\}=\mathcal{D}^\ast. \end{aligned}
\end{multline*}
This identifies $\mathcal{D}^{\ast}$ with the section of $-\epi Z$ by the hyperplane
\[
\{(w,y)\in\mathbb{R}^{2}\times\mathbb{R}:c^Tw=1\}=\{(w_1,w_2,y)\in\mathbb{R}^{3}:w_2=1\}.
\]
\end{example}

\section{Pricing and hedging European options under proportional transaction
costs}

\subsection{Currency model}\label{Sect:curr-mod}

The model is based on a filtered probability space $(\Omega,\mathcal{F}%
,\mathbb{P};(\mathcal{F}_{t})_{t=0}^{T})$. We assume that $\Omega$ is finite,
that $\mathcal{F}_{0}=\{\emptyset,\Omega\}$, that $\mathcal{F}_{T}%
=\mathcal{F}=2^{\Omega}$ and that $\mathbb{P}(\omega)>0$ for all $\omega
\in\Omega$. For each $t$ denote by $\Omega_{t}$ the collection of atoms of
$\mathcal{F}_{t}$, called the time~$t$ \emph{nodes} of the associated stock price tree
model. Note that $\Omega_{0}=\{\Omega\}$ and $\Omega_{T}=\{\{w\}:\omega
\in\Omega\}$. For every $t<T$ a node $\nu\in\Omega_{t+1}$ is said to be a \emph{successor} of
a node $\mu\in\Omega_{t}$ if $\nu\subseteq\mu$. We denote for all $\mu\in\Omega_t$
\[
 \successors \mu = \{\nu\in\Omega_{t+1}:\nu\text{ a successor of }\mu\}.
\]

For each $t$ let $\mathcal{L}_{t}=\mathcal{L}^{0}(\mathbb{R}^{d}%
;\mathcal{F}_{t})$ be the collection of $\mathcal{F}_{t}$-measurable
$\mathbb{R}^{d}$-valued random variables. We identify elements of
$\mathcal{L}_{t}$ with functions on $\Omega_{t}$ whenever convenient.

We consider the discrete-time currency model introduced by \cite{kabanov1999}
and studied by others. The model contains $d$ assets or currencies. At each
trading date $t=0,1,\ldots,T$ one unit of each asset ${k}=1,\ldots,d$ can be
obtained by exchanging $\pi^{{jk}}_{t}>0$ units of asset ${j}=1,\ldots,d$. We
assume that {the exchange rates $\pi^{{jk}}_{t}$ are $\mathcal{F}_{t}%
$-measurable and} $\pi^{{jj}}_{t}=1$ for all $t$ and ${j,k}$.

We say that a portfolio $x\in\mathcal{L}_{t}$ can be \emph{exchanged} into
a portfolio $y\in\mathcal{L}_{t}$ at time~$t$ whenever there are
$\mathcal{F}_{t}$-measurable random variables $\beta^{jk}\geq0$,
$j,k=1,\ldots,d$ such that for all $k=1,\ldots,d$
\[
y^{k}=x^{k}+\sum_{j=1}^{d}\beta^{jk}-\sum_{j=1}^{d}\beta^{kj}\pi_{t}^{kj},
\]
where $\beta^{jk}$ represents the number of units of asset~$k$ received as a
result of exchanging some units of asset~$j$.

The \emph{solvency cone} $\mathcal{K}_{t}\subseteq\mathcal{L}_{t}$ is the set
of portfolios that are \emph{solvent} at time~$t$, i.e.\ those portfolios at
time $t$ that can be exchanged into portfolios with non-negative holdings in
all~$d$ assets. It is straightforward to show that $\mathcal{K}_{t}$ is the
convex cone generated by the canonical basis $e^{1},\ldots,e^{d}$ of
$\mathbb{R}^{d}$ and the vectors $\pi^{{jk}}_{t}e^{j}-e^{k}$ for
${j,k}=1,\ldots,d$, and so $\mathcal{K}_{t}$ is a polyhedral cone. Note that
$\mathcal{K}_{t}$ contains all the {non-negative} elements of $\mathcal{L}%
_{t}$.

A \emph{self-financing strategy} $y=(y_{t})_{t=0}^{T}$ is a predictable
$\mathbb{R}^{d}$-valued process (i.e.\ $y_{0}\in\mathcal{L}_{0}$ and $y_{t}%
\in\mathcal{L}_{t-1}$ for $t=1,\ldots,T$) such that%
\[
y_{t}-y_{t+1}\in\mathcal{K}_{t}\quad\text{for all }t=0,\ldots,T-1
\]
Here $y_{0}\in\mathcal{L}_{0}$ is the initial endowment, and $y_{t}%
\in\mathcal{L}_{t-1}$ for each $t=1,\ldots,T$ is the portfolio held from time
$t-1$ to time $t$. Let~$\Phi$ be the set of self-financing strategies.

A self-financing strategy $y=(y_{t})\in\Phi$ is called an \emph{arbitrage
opportunity} if $y_{0}=0$ and {there is a portfolio $x\in\mathcal{L}%
_{T}\setminus\{0\}$ with non-negative holdings in all~$d$ assets such that
$y_{T}-x\in\mathcal{K}_{T}$.} This notion of arbitrage was considered by
\cite{schachermayer2004}, and its absence is formally different but
equivalent to the weak no-arbitrage condition introduced by \cite{kabanov_stricker2001b}.

\begin{theorem}
[\cite{kabanov_stricker2001b,schachermayer2004}]\label{th:2012-10-03:ftap}
The model admits no arbitrage opportunity if and only if there exists a probability
measure~$\mathbb{Q}$ equivalent to~$\mathbb{P}$ and an $\mathbb{R}^{d}$-valued
$\mathbb{Q}$-martingale $S=(S_{t})$ such that
\begin{equation}
S_{t}\in\mathcal{K}_{t}^{+}\setminus\{0\}\text{ for all }t,
\label{eq:th:2012-10-03:ftap}%
\end{equation}
where $\mathcal{K}_{t}^{+}$ is the dual cone of~$\mathcal{K}_{t}$.
\end{theorem}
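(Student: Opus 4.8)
The plan is to reformulate the absence of arbitrage as a statement about a single convex cone of attainable terminal positions, and then to produce the pair $(\mathbb{Q},S)$ by separating this cone from the non-negative orthant. Write $\mathcal{A}=-\sum_{t=0}^{T}\mathcal{L}^{0}(\mathcal{K}_{t};\mathcal{F}_{t})$ for the set of terminal portfolios $x\in\mathcal{L}_{T}$ that can be super-replicated from zero endowment, i.e.\ such that $y_{T}-x\in\mathcal{K}_{T}$ for some $y\in\Phi$ with $y_{0}=0$. Unwinding the self-financing condition as $x=-\sum_{t=0}^{T}k_{t}$ with each $k_{t}\in\mathcal{L}^{0}(\mathcal{K}_{t};\mathcal{F}_{t})$, one sees at once that the existence of an arbitrage opportunity is equivalent to $\mathcal{A}\cap\mathcal{L}_{T}^{+}\neq\{0\}$, where $\mathcal{L}_{T}^{+}$ denotes the componentwise non-negative elements of $\mathcal{L}_{T}$. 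Since $\Omega$ is finite, each $\mathcal{L}^{0}(\mathcal{K}_{t};\mathcal{F}_{t})$ is a finitely generated (polyhedral) cone, hence so is their sum $\mathcal{A}$; in particular $\mathcal{A}$ is closed. This finiteness is exactly what makes the separation argument below elementary, in contrast to the delicate closedness questions arising in continuous time.

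For the substantive implication, assume there is no arbitrage, so that $\mathcal{A}\cap\mathcal{L}_{T}^{+}=\{0\}$. Equip $\mathcal{L}_{T}\cong(\mathbb{R}^{d})^{\Omega}$ with the inner product $\langle\xi,\eta\rangle=\mathbb{E}[\xi^{T}\eta]$ and consider the compact base $\Delta=\{\eta\in\mathcal{L}_{T}^{+}:\mathbb{E}[\mathbf{1}^{T}\eta]=1\}$, where $\mathbf{1}=(1,\dots,1)^{T}$. Then $\Delta$ is disjoint from the closed convex cone $\mathcal{A}$, so a strict separating hyperplane yields some $Z\in\mathcal{L}_{T}$ with $\langle Z,\xi\rangle\le 0$ for all $\xi\in\mathcal{A}$ and $\langle Z,\eta\rangle>0$ for all $\eta\in\Delta$; testing the latter against the extreme points of $\Delta$ forces $Z$ to be strictly positive in every component and every state.

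Next I would manufacture the martingale and the measure from $Z$ by conditioning. Put $\bar{Z}_{t}=\mathbb{E}[Z\mid\mathcal{F}_{t}]$, a strictly positive $\mathbb{P}$-martingale. Testing $\langle Z,\xi\rangle\le 0$ against the attainable claims $-v\in\mathcal{A}$ with $v\in\mathcal{L}^{0}(\mathcal{K}_{t};\mathcal{F}_{t})$ supported on a single node $\mu\in\Omega_{t}$ gives $\bar{Z}_{t}(\mu)^{T}v(\mu)\ge 0$ for every $v(\mu)\in\mathcal{K}_{t}$, so that $\bar{Z}_{t}\in\mathcal{K}_{t}^{+}\setminus\{0\}$ for each $t$. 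Using the first asset as numeraire, I define $\mathbb{Q}$ by $d\mathbb{Q}/d\mathbb{P}=\bar{Z}_{T}^{1}/\bar{Z}_{0}^{1}$, which is equivalent to $\mathbb{P}$ because $Z^{1}>0$, and set $S_{t}=\bar{Z}_{t}/\bar{Z}_{t}^{1}$. A direct computation with the density process $\bar{Z}_{t}^{1}/\bar{Z}_{0}^{1}$ shows that $S$ is a $\mathbb{Q}$-martingale, and since $\mathcal{K}_{t}^{+}$ is a cone and $\bar{Z}_{t}^{1}>0$ we retain $S_{t}\in\mathcal{K}_{t}^{+}\setminus\{0\}$, as required.

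For the converse, suppose $(\mathbb{Q},S)$ as in the statement exist and let $y\in\Phi$ with $y_{0}=0$. Because $y_{t+1}$ is $\mathcal{F}_{t}$-measurable and $S$ is a $\mathbb{Q}$-martingale, the martingale property together with $S_{t}^{T}(y_{t}-y_{t+1})\ge 0$ (from $y_{t}-y_{t+1}\in\mathcal{K}_{t}$ and $S_{t}\in\mathcal{K}_{t}^{+}$) shows that $S_{t}^{T}y_{t}$ is a $\mathbb{Q}$-supermartingale, whence $\mathbb{E}_{\mathbb{Q}}[S_{T}^{T}y_{T}]\le S_{0}^{T}y_{0}=0$. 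If $y_{T}-x\in\mathcal{K}_{T}$ with $x\in\mathcal{L}_{T}^{+}\setminus\{0\}$, then $S_{T}^{T}x\le S_{T}^{T}y_{T}$, and since the generators $\pi^{jk}_{T}e^{j}-e^{k}$ of $\mathcal{K}_{T}$ force every coordinate of a nonzero $w\in\mathcal{K}_{T}^{+}$ to be strictly positive, we obtain $0<\mathbb{E}_{\mathbb{Q}}[S_{T}^{T}x]\le 0$, a contradiction. The main obstacle is the separation step: one must extract a functional that is strictly positive in every state, so that $\mathbb{Q}\sim\mathbb{P}$, which is why the argument separates $\mathcal{A}$ from the compact base $\Delta$ rather than merely from $\mathcal{L}_{T}^{+}$, and it is precisely here that the closedness of $\mathcal{A}$ guaranteed by the finiteness of $\Omega$ is essential.
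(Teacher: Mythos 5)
Your proof is correct, but there is nothing in the paper to compare it against: the paper states this theorem as a known result, citing \cite{kabanov_stricker2001b} and \cite{schachermayer2004}, and gives no proof of its own. What you have written is essentially the standard finite-$\Omega$ argument from that cited literature, carried out correctly. The reduction of arbitrage to $\mathcal{A}\cap\mathcal{L}_{T}^{+}\neq\{0\}$ via the decomposition $y_{T}-k_{T}=-\sum_{t=0}^{T}k_{t}$ is right (note $k_{t}=y_{t}-y_{t+1}$ is indeed $\mathcal{F}_{t}$-measurable by predictability), and the finiteness of $\Omega$ does make $\mathcal{A}$ finitely generated, hence closed, so the strict separation from the compact base $\Delta$ is legitimate; testing against the extreme points $\mathbb{P}(\omega)^{-1}e^{k}\mathbf{1}_{\{\omega\}}$ does force $Z$ strictly positive, which is exactly what delivers $\mathbb{Q}\sim\mathbb{P}$. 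Your process $\bar{Z}_{t}=\mathbb{E}[Z\mid\mathcal{F}_{t}]$ is precisely the \emph{consistent price process} mentioned in the paper's remark following the theorem, and the passage to the pair $(\mathbb{Q},S)$ by normalising with the first coordinate, verified via the Bayes formula with the density process $\bar{Z}_{t}^{1}/\bar{Z}_{0}^{1}$, is the standard equivalence between the two formulations. The converse is also sound: the supermartingale bound $\mathbb{E}_{\mathbb{Q}}[S_{T}^{T}y_{T}]\le 0$ uses only predictability and $S_{t}\in\mathcal{K}_{t}^{+}$, and your observation that the dual inequalities $\pi_{T}^{kj}w^{k}\ge w^{j}$ force any nonzero $w\in\mathcal{K}_{T}^{+}$ to be strictly positive in every coordinate is exactly what yields the contradiction $0<\mathbb{E}_{\mathbb{Q}}[S_{T}^{T}x]\le 0$. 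Two cosmetic points only: the step from strict separation to $\langle Z,\xi\rangle\le 0$ for all $\xi\in\mathcal{A}$ deserves the one-line cone argument (a linear functional bounded above on a cone containing $0$ is nonpositive on it), and your argument establishes the equivalence for the paper's arbitrage notion directly, sidestepping the formally different weak no-arbitrage condition of \cite{kabanov_stricker2001b} under which part of the cited literature is phrased --- a simplification worth flagging explicitly.
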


\begin{remark}\upshape
A pair $(\mathbb{Q},S)$ satisfying the conditions in Theorem
\ref{th:2012-10-03:ftap} is called a \emph{consistent
pricing pair}. In place of such a pair $(\mathbb{Q},S)$ one can equivalently use the so-called
\emph{consistent price process} $S_{t}\mathbb{E}%
_{\mathbb{P}}(\frac{d\mathbb{Q}}{d\mathbb{P}}|\mathcal{F}_{t})$; see \cite{schachermayer2004}.
\end{remark}

\subsection{European options}

A \emph{European option} with expiry time $T>0$ and payoff $\xi\in
\mathcal{L}_{T}$ is a contract that gives its holder (i.e.\ the option buyer)
the right to receive a portfolio $\xi$ of currencies at time~$T$. On the other
hand, the writer (seller) of the option is obliged to deliver this portfolio
to the buyer.

To hedge against this liability the writer can follow a self-financing
strategy $y\in\Phi$ such that $y_{T}-\xi\in\mathcal{K}_{T}$. The initial
endowment $y_{0}$ of such a strategy~$y$ is called a \emph{superhedging
portfolio}, and the strategy $y$ itself is called a \emph{superhedging
strategy} for the European option~$\xi$.

The \emph{ask price} (\emph{seller's price}, \emph{superhedging price})
$\pi_{i}^{a}(\xi)$ of the European option in currency $i=1,\ldots,d$ can be
understood as the lowest value $x$ such that the portfolio consisting of $x$
units of currency~$i$ and no other currency is a superhedging portfolio
for~$\xi$. In other words,%
\[
\pi_{i}^{a}(\xi)=\min\left\{  x\in\mathbb{R}:xe^{i}\text{ is a superhedging
portfolio for }\xi\right\}  .
\]

On the other hand, to hedge his position the option buyer would like to follow
a self-financing strategy $y\in\Phi$ such that $y_{T}+\xi\in\mathcal{K}_{T}$.
Here $-y_{0}$ is a portfolio of currencies which the option buyer could borrow
at time~$0$ and would be able to settle later by following the strategy~$y$
and using the payoff~$\xi$ to be received on exercising the option at
time~$T$. We call $-y_{0}$ a \emph{subhedging portfolio} and $-y$ a
\emph{subhedging strategy} for the European option~$\xi$.

The \emph{bid price} (\emph{buyer's price}, \emph{subhedging price}) $\pi
_{i}^{b}(\xi)$ of the European option in currency $i=1,\ldots,d$ can be
understood as the highest value $x$ such that the portfolio consisting of $x$
units of currency~$i$ and no other currency is a subhedging portfolio
for~$\xi$,%
\[
\pi_{i}^{b}(\xi)=\max\left\{  x\in\mathbb{R}:xe^{i}\text{ is a subhedging
portfolio for }\xi\right\}  .
\]
It is the highest amount in currency~$i$ that an option holder could raise by
using the option as collateral.

Observe that $-y$ is a subhedging strategy for a European option $\xi$ if and
only if $y$ is a superhedging strategy for~$-\xi$. It follows immediately that%
\[
\pi_{i}^{b}(\xi)=-\pi_{i}^{a}(-\xi).
\]
Because of these relationships it is sufficient to develop algorithms for
hedging and pricing the seller's (short) position in a European option.%

\subsection{Primal and dual constructions\label{Sect:primal-dual-constr}}

The constructions presented here for European options are a special case of
those developed by \cite{Roux_Zastawniak2014} to hedge and price the much
wider class of American type options under proportional transaction costs.
Construction~4.2 in \cite{Roux_Zastawniak2014}, which produces the set of
superhedging portfolios, takes a particularly simple form in this special case:

\begin{itemize}
\item For each $\omega\in\Omega_{T}$ put%
\[
\mathcal{Z}_{T}^{\omega}=\xi^{\omega}+\mathcal{K}_{T}^{\omega}.
\]

\item If $\mathcal{Z}_{t+1}$ has already been constructed for some
$t=0,1,\ldots,T-1$, then for each $\omega\in\Omega_{t}$ put%
\begin{align*}
\mathcal{W}_{t}^{\omega}  &  =\bigcap_{\omega^{\prime}\in\successors\omega
}\mathcal{Z}_{t+1}^{\omega^{\prime}},\\
\mathcal{Z}_{t}^{\omega}  &  =\mathcal{W}_{t}^{\omega}+\mathcal{K}_{t}%
^{\omega}%
\end{align*}
(To link this with Construction~4.2 in \cite{Roux_Zastawniak2014} observe that
the formula for $\mathcal{W}_{t}$ can be written concisely as $\mathcal{W}%
_{t}=\mathcal{Z}_{t+1}\cap\mathcal{L}_{t}$.)
\end{itemize}

\noindent For each $t$ the set $\mathcal{Z}_{t}$ consists of all portfolios
that allow the seller to hedge the option by following a self-financing
strategy between times $t$ and~$T$. In particular, $\mathcal{Z}_{0}$ is the
set of superhedging portfolios. The ask price of the option can be expressed
in terms of $\mathcal{Z}_{0}$ as%
\begin{equation}
\pi_{i}^{a}(\xi)=\min\left\{  x\in\mathbb{R}:xe^{i}\in\mathcal{Z}_{0}\right\}
. \label{Eq:pi-def}%
\end{equation}

The above construction involves two standard operations on polyhedral convex
sets, namely the intersection of finitely many such sets and the algebraic sum
of such a set and a polyhedral convex cone. Both operations can be implemented
using standard geometric methods in existing software libraries, for example, \emph{Parma Polyhedra Library} \citep{Bagnara_Hill_Zaffanella2008a} and \emph{PolyLib} \citep[among others]{LeVerge1992,Wilde1993,IRISA2001,Loechner2010}.
As soon as the set~$\mathcal{Z}_{0}$ of superhedging portfolios has been computed in this manner, it becomes a routine task to evaluate the option price $\pi
_{i}^{a}(\xi)$ using~(\ref{Eq:pi-def}).
\cite{Roux_Zastawniak2014}
provided a numerical implementation of this procedure for hedging and pricing
European options (and much more generally, American type options) in currency
markets with transaction costs by using the \emph{Maple} package \emph{Convex} \citep{Franz2009}.

Moreover, once the $\mathcal{Z}_{t}$ have been constructed, it is
straightforward to compute a superhedging strategy starting from any
superhedging portfolio $y_{0}\in\mathcal{Z}_{0}$. Namely, if $y_{t}%
\in\mathcal{Z}_{t}$ has already been computed for some $t=0,1,\ldots,T-1$, we
can take $y_{t+1}\in\left(  y_{t}-\mathcal{K}_{t}\right)  \cap\mathcal{W}_{t}%
$. The intersection is non-empty since $\mathcal{Z}_{t}=\mathcal{W}%
_{t}+\mathcal{K}_{t}$, so it is always possible to find such $y_{t+1}$, though
it may be non-unique. The self-financing condition $y_{t}-y_{t+1}%
\in\mathcal{K}_{t}$ is clearly satisfied. Moreover, since $\mathcal{W}%
_{t}=\mathcal{Z}_{t+1}\cap\mathcal{L}_{t}$, it follows that $y_{t+1}$ is
$\mathcal{F}_{t}$-measurable, so $y$ constructed in this manner will be a
predictable process. It also follows that $y_{t+1}\in\mathcal{Z}_{t+1}$, which
makes it possible to iterate the procedure.

It is also possible to follow the construction using convex dual objects to
the~$\mathcal{Z}_{t}$. We introduce the support functions
\[
Z_{t}(x)=\sup\left\{  x^{T}z:z\in-\mathcal{Z}_{t}\right\}  ,\quad
W_{t}(x)=\sup\left\{  x^{T}z:z\in-\mathcal{W}_{t}\right\}
\]
and the linear function%
\[
U(x)=-x^{T}\xi
\]
defined for all $x\in\mathbb{R}^{d}$. If we need to make the dependence on $\omega\in\Omega$ explicit in these functions, we shall write $Z^\omega_t,W^\omega_t,U^\omega$. The above construction (we call it the
\emph{primal construction}) can now be written in the following equivalent
form (called the \emph{dual construction}); see Lemma~5.5 in
\cite{Roux_Zastawniak2014}:

\begin{itemize}
\item For each $\omega\in\Omega_{T}$%
\[
Z_{T}^{\omega}=\left\{
\begin{array}
[c]{ll}%
U^{\omega} & \text{on }\mathcal{K}_{T}^{+\omega},\\
\infty & \text{otherwise.}%
\end{array}
\right.
\]
This is the linear function $U^{\omega}$ restricted to the domain
$\mathcal{K}_{T}^{+\omega}$.

\item Suppose that $Z_{t+1}$ has been constructed for some $t=0,1,\ldots,T-1$.
Then, for each node $\omega\in\Omega_{t}$ let $W_{t}^{\omega}$ be the convex
hull of the family of convex functions $Z_{t+1}^{\omega^{\prime}}$ indexed by
$\omega^{\prime}\in\successors\omega$, and let $Z_{t}^{\omega}$ be the
restriction of $W_{t}^{\omega}$ to the domain $\mathcal{K}_{t}^{+\omega}$:%
\begin{align*}
W_{t}^{\omega}  &  =\conv\left\{  Z_{t+1}^{\omega^{\prime}}:\omega^{\prime}%
\in\successors\omega\right\}  ,\\
Z_{t}^{\omega}  &  =\left\{
\begin{array}
[c]{ll}%
W_{t}^{\omega} & \text{on }\mathcal{K}_{t}^{+\omega},\\
\infty & \text{otherwise.}%
\end{array}
\right.
\end{align*}

\end{itemize}

\noindent Once $Z_{0}$ has been computed, the ask price of the option can be
obtained as (see Theorem~4.4 in \cite{Roux_Zastawniak2014})%
\[
\pi_{i}^{a}(\xi)=-\min\left\{  Z_{0}(x):x\in\mathbb{R}^{d},x_{i}=1\right\}  .
\]

This dual construction also lends itself well to computer implementation.
Taking the convex hull of finitely many polyhedral convex functions and
restricting the domain of such a function to a given polyhedral convex cone
are operations equivalent to some standard operations on polyhedral convex sets, which are widely available in computer packages such as the \emph{Convex} library in \emph{Maple} used by \cite{Roux_Zastawniak2014}.

Observe that the dual construction, which follows from Lemma~5.5 in \cite{Roux_Zastawniak2014} specialised to the case of European options, is equivalent to the construction in Corollary~6.3 of \cite{Loehne_Rudloff2014}. The only difference is that the dual construction is expressed in terms of the support functions $Z_t$ and~$W_t$, whereas \cite{Loehne_Rudloff2014} use $\widetilde V_t(x)=-Z_t(x)$ and $V_t(x)=-W_t(x)$ defined for all $x$'s on the hyperplane in $\mathbb{R}^d$ given by the condition $x^i=1$. Both are a straightforward extension to $d$ assets of the construction stated in Algorithm~4.1 of \cite{Roux_Tokarz_Zastawniak2008} in the case of $2$ assets.%\marginpar{{Do we want to point out that \changed{Corollary~6.3} of \cite{Loehne_Rudloff2014} depends on their \changed{Theorem 6.1} and \changed{Corollary 6.2} and therefore in fact requires the robust no-arbitrage condition rather than just the no-arbitrage condition, and it also silently depends on the simplifying assumption that $\mathcal{K}_t$ contains no lines made in \changed{Theorem 6.1}? Not sure if we want to go into that.}}

\subsection{SHP-algorithm\label{Sect:SHP-alg}}

\cite{Loehne_Rudloff2014} consider the same problem of pricing and hedging
European options (though not options of American type). In particular, the
same sets as in the primal construction above are denoted by
\cite{Loehne_Rudloff2014} as%
\[
SHP_{t}(\xi)=\mathcal{Z}_{t}.
\]
These authors propose a different construction of the $\mathcal{Z}_{t}$ based
on linear vector optimisation methods and geometric duality.

From this perspective, $S=\mathcal{W}_{t}$ can be viewed as the feasible set
of a linear vector optimisation problem~(\ref{eq:P}). If the solvency cone
$\mathcal{K}_{t}$ contains no lines, which means that there are non-zero
transaction costs between any two currencies, then the matrix $P$
in~(\ref{eq:P}) is just the $d\times d$ unit matrix, and the ordering cone is
$C=\mathcal{K}_{t}$. The upper image of the linear vector optimisation
problem~(\ref{eq:P}) is%
\[
\mathcal{P}=P[S]+C=\mathcal{W}_{t}+\mathcal{K}_{t}=\mathcal{Z}_{t}.
\]
Because $C$ contains no lines, Benson's algorithm, see \cite{Benson1998} or
\cite{Hamel_Lohne_Rudloff2013}, can be applied to compute a solution to the
dual problem~(\ref{eq:D*}) and hence the corresponding lower
image~$\mathcal{D}^{\ast}$. The Benson algorithm yields simultaneously a
solution to (\ref{eq:P}) and gives the upper image~$\mathcal{P}=\mathcal{Z}%
_{t}$. We know from Proposition~\ref{Prop1-ver2} that if $C$ contains no
lines, then $\mathcal{D}^{\ast}$ can be identified with a section of the
epigraph of the support function $Z$ of $-\mathcal{P}$. Since $\mathcal{P=Z}%
_{t}$, it follows that $Z=Z_{t}$ is the function from the dual construction in Section~\ref{Sect:primal-dual-constr}.

A complication arises when the solvency cone $\mathcal{K}_{t}$ contains some
lines, which means that there are currencies which can be exchanged into one
another without incurring any transaction costs. This is dealt with by taking
$P$ to be the matrix representing the so-called liquidation map, a linear map
which amounts to liquidating all but one of the assets that can be exchanged
into one another without transaction costs; see (4.1) in
\cite{Loehne_Rudloff2014} for the precise definition of~$P$. In this case
$C=P[\mathcal{K}_{t}]$ contains no lines because there are no longer any
assets that can be exchanged into one another without transaction costs. Then
the upper image of the linear vector optimisation problem~(\ref{eq:P}) is%
\[
\mathcal{P}=P[S]+C=P[\mathcal{W}_{t}+\mathcal{K}_{t}]=P[\mathcal{Z}_{t}].
\]
Since $C$ contains no lines, Benson's algorithm can also be applied in this
case to compute a solution to the dual problem~(\ref{eq:D*}) and hence the
corresponding lower image~$\mathcal{D}^{\ast}$. The Benson algorithm yields
simultaneously a solution to (\ref{eq:P}) and gives the upper
image~$\mathcal{P}=P[\mathcal{Z}_{t}]$. This then gives $\mathcal{Z}%
_{t}=\{x\in\mathcal{L}_{t}:Px\in\mathcal{P}\}$ as the inverse image
of~$\mathcal{P}$ under $P$. Once again by Proposition~\ref{Prop1-ver2}, since
$C$ contains no lines, it follows that $\mathcal{D}^{\ast}$ can be identified
with a section of the epigraph of the support function $Z$ of $-\mathcal{P}%
=-P[\mathcal{Z}_{t}]$. This is related to $Z_{t}$, the support function
of~$-\mathcal{Z}_{t}$, by $Z(x)=Z_{t}(P^{T}x)$.

\section{Example}\label{sec:num-example}

In this section we present {an} example to illustrate the numerical procedures discussed in Section~\ref{Sect:primal-dual-constr}. Consider a model involving three assets, with time horizon $\tau=1$ and with $T=4$ time steps. Two of the assets are risky with correlated returns, and follow the two-asset {recombinant} \citet{Korn_Muller2009} model with Cholesky decomposition. That is, there are $(t+1)^2$ possibilities for the stock prices $S_t=(S^1,S^2)$ at each time step $t=0,\ldots,T$, indexed by pairs $(j_1,j_2)$ where $1\le j_1,j_2\le t+1$, and each non-terminal node with stock price $S_t(j_1,j_2)$ has four successors, associated with the stock prices $S_{t+1}(j_1,j_2)$, $S_{t+1}(j_1+1,j_2)$, $S_{t+1}(j_1,j_2+1)$ and $S_{t+1}(j_1+1,j_2+1)$.  With $\Delta=\tfrac{\tau}{T}$ defined for convenience, the stock prices are given by
\begin{align*}
 S^1_t(j_1,j_2) &= S^1_0e^{\left(r-\tfrac{1}{2}\sigma_1^2\right)t\Delta +(2j_1-t-2)\sigma_1\sqrt{\Delta}},\\
 S^2_t(j_1,j_2) &= S^2_0e^{\left(r-\tfrac{1}{2}\sigma_2^2\right)t\Delta +\left((2j_1-t-2)\rho + (2j_2-t-2)\sqrt{1-\rho^2}\right)\sigma_2\sqrt{\Delta}}
\end{align*}
for $t=0,\ldots,T$ and $j_1,j_2=1,\ldots,t+1$, where $S^1_0=45$ and $S^2_0=50$ are the initial {stock prices}, $\sigma_1=15\%$ and $\sigma_2=20\%$ are the volatilities of {the} returns and $\rho=20\%$ is the correlation between {the log returns on the two stocks}. The third asset is a risk-free bond with nominal interest rate $r=5\%$ and value process
\[B_t=(1+r\Delta)^{-(T-t)}\text{ for }t=0,\ldots,T.\]
Proportional transaction costs are introduced by allowing the asset prices to have constant (proportional) bid-ask spreads, i.e.~the bid and ask prices are
\begin{align*}
 S^{1b}_t &=(1-k_1)S^1_t, & S^{1a}_t &=(1+k_1)S^1_t, \\
 S^{2b}_t &=(1-k_2)S^2_t, & S^{2a}_t &=(1+k_2)S^2_t, \\
 B^b_t &=(1-k_3)B_t, & B^a_t &=(1+k_3)B_t
\end{align*}
for $t=0,\ldots,T$, where $k_1=2\%$, $k_2=4\%$ and $k_3=1\%$. The matrix of exchange rates at each time step $t$ is then
\[
 \begin{pmatrix}\pi^{11}_t&\pi^{12}_t&\pi^{13}_t\\ \pi^{21}_t&\pi^{22}_t&\pi^{23}_t\\ \pi^{31}_t&\pi^{32}_t&\pi^{33}_t\end{pmatrix}
 =\begin{pmatrix}1&\frac{S^{2a}_t}{S^{1b}_t}&\frac{B^{a}_t}{S^{1b}_t}\\ \frac{S^{1a}_t}{S^{2b}_t}&1&\frac{B^{a}_t}{S^{2b}_t}\\ \frac{S^{1a}_t}{B^b_t}&\frac{S^{2a}_t}{B^b_t}&1\end{pmatrix},
 %=\begin{pmatrix}1&\frac{1+k_2}{1-k_1}\frac{S^2_t}{S^1_t}&\frac{1+k_3}{1-k_1}\frac{B_t}{S^1_t}\\ \frac{1+k_1}{1-k_2}\frac{S^1_t}{S^2_t}&1&\frac{1+k_3}{1-k_2}\frac{B_t}{S^2_t}\\ \frac{1+k_1}{1-k_3}\frac{S^1_t}{B_t}&\frac{1+k_2}{1-k_3}\frac{S^2_t}{B_t}&1\end{pmatrix},
\]
and the solvency cone is
\[
 \mathcal{K}_t=\cone\left\{\begin{pmatrix}\phantom-S^{2a}_t\\-S^{1b}_t\\\phantom-0\end{pmatrix},
  \begin{pmatrix}\phantom-B^{a}_t\\\phantom-0\\-S^{1b}_t\end{pmatrix},
  \begin{pmatrix}-S^{2b}_t\\\phantom-S^{1a}_t\\\phantom-0\end{pmatrix},
  \begin{pmatrix}\phantom-0\\\phantom-B^{a}_t\\-S^{1b}_t\end{pmatrix},
  \begin{pmatrix}-B^b_t\\\phantom-0\\\phantom-S^{1a}_t\end{pmatrix},
  \begin{pmatrix}\phantom-0\\-B^b_t\\\phantom-S^{2a}_t\end{pmatrix}\right\}.
\]
This model was also considered by \citet[Section 5.2]{Loehne_Rudloff2014}; note that the assets have been reordered in the present paper.

Consider an exchange option with physical delivery and payoff
\[
 \xi=(\mathbf{1}_{\{S^{1a}_T\ge S^{2a}_T\}},-\mathbf{1}_{\{S^{1a}_T\ge S^{2a}_T\}},0)
\]
that matures at time step $T$. \citet[Example 5.3]{Loehne_Rudloff2014} reported
\[
  SHP_0=\conv\left\{\left(\begin{array}{d} 0.584\\-0.260\\-7.760\end{array}\right),
    \left(\begin{array}{d}0.498\\-0.331\\0.000\end{array}\right),
    \left(\begin{array}{d}0.347\\-0.446\\13.341\end{array}\right)\right\} + \mathcal{K}_0,
\]
and gave the ask price of the exchange option in terms of the bond as
\[
\pi_3^{a}(\xi) = 7.418.
\]
The {boundary} of $SHP_0$ is depicted in Figure~\ref{fig:2}. Application of the primal construction in Section \ref{Sect:primal-dual-constr} produces\begin{figure}
\begin{center}%
\begin{tabular}
[c]{cc}%
\includegraphics[width=0.5\figurewidth]{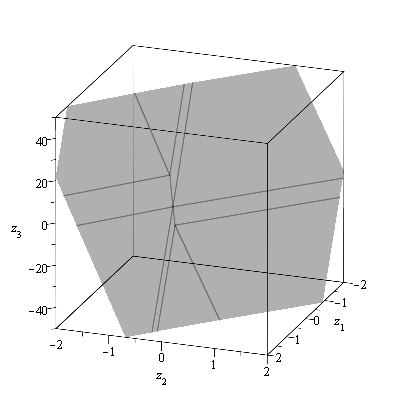} &
\includegraphics[width=0.5\figurewidth]{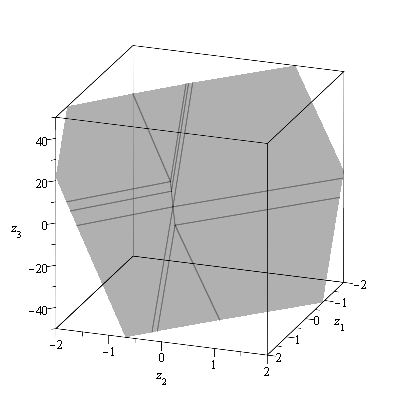} \\
$SHP_0$ & $\mathcal{Z}_0$
\end{tabular}
\end{center}
\caption{{Boundary of the set} of superhedging endowments}%
\label{fig:2}%
\end{figure}
\[
  \mathcal{Z}_0=\conv\left\{\left(\begin{array}{d}  0.584\\-0.260\\-7.760\end{array}\right),
  \left(\begin{array}{d} 0.498\\-0.331\\0.000\end{array}\right),
   \left(\begin{array}{d} 0.399\\-0.406\\8.714\end{array}\right),
   \left(\begin{array}{d} 0.424\\-0.388\\6.564\end{array}\right)\right\} + \mathcal{K}_0,
\]
from which the ask price of the exchange option in terms of each of three assets can be computed as
\begin{align*}
\pi_1^{a}(\xi) &= 0.152, &\pi_2^{a}(\xi) &= 0.146, & \pi_3^{a}(\xi) &= 7.418.
\end{align*}
There is substantial agreement between $SHP_0$ and $\mathcal{Z}_0$, which can be confirmed visually (see Figure~\ref{fig:2}), and in view of the agreement on the ask price $\pi_3^a(\xi)$, we ascribe the differences in the specifications of $SHP_0$ and $\mathcal{Z}_0$ to the error level chosen in Benson's algorithm. Finally, application of the dual construction in Section \ref{Sect:primal-dual-constr} produces the support function $Z_0$ of $-\mathcal{Z}_0$. The set
\[
  \mathcal{D}^\ast_0=\{(w_1,w_2,y):y\le-Z_0(w_1,w_2,1)\}\]
is the lower image of the dual problem~(\ref{eq:D*}) with the choice $c=(0,0,1)^T$.
It has 12 vertices
\begin{gather*}
\left(\begin{array}{d}48.726\\51.930\\7.081\end{array}\right),
  \left(\begin{array}{d}48.726\\51.681\\7.178\end{array}\right),
  \left(\begin{array}{d}45.888\\54.050\\4.981\end{array}\right),
  \left(\begin{array}{d}48.726\\55.201\\5.702\end{array}\right),
  \left(\begin{array}{d}45.888\\49.946\\6.048\end{array}\right),
  \left(\begin{array}{d}48.726\\50.955\\7.418\end{array}\right), \\
  \left(\begin{array}{d}48.573\\50.796\\7.395\end{array}\right),
  \left(\begin{array}{d}47.761\\49.946\\7.141\end{array}\right),
  \left(\begin{array}{d}46.565\\54.907\\5.012\end{array}\right),
  \left(\begin{array}{d}46.815\\55.201\\4.982\end{array}\right),
  \left(\begin{array}{d}46.405\\54.718\\5.018\end{array}\right),
  \left(\begin{array}{d}45.888\\54.108\\4.962\end{array}\right),
\end{gather*}
{and is depicted in Figure~\ref{fig:3}.}%
\begin{figure}
\begin{center}%
\includegraphics[width=0.5\figurewidth]{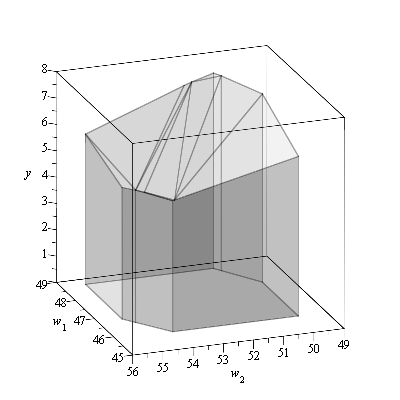}
\end{center}
\caption{Lower image $\mathcal{D}^\ast_0$ {associated with $Z_0$}}%
\label{fig:3}
\end{figure}
The {maximum} of $\mathcal{D}^\ast_0$ in the $y$-direction is
\[
\pi_3^{a}(\xi)=7.418.
\]%

We conclude this numerical example by demonstrating the procedure of finding a superhedging {strategy} $y=(y_t)_{t=0}^T$ starting from the initial endowment \[y_0=(0,0,\pi_3^{a}(\xi))^T\in\mathcal{Z}_0\] along the price path in Table \ref{tab:1}. At each time step $t$ the portfolio $y_t$ (indicated by a dot on the graph of the {boundary} of $\mathcal{Z}_t$ in Table \ref{tab:1}) is {rebalanced} into a portfolio
\[y_{t+1}\in(y_t-\mathcal{K}_t)\cap\mathcal{W}_t\subseteq\mathcal{Z}_{t+1}.\]
As can be seen in Table \ref{tab:1}, for this particular path the set $(y_t-\mathcal{K}_t)\cap\mathcal{W}_t$ is a singleton at time steps $t=0$ and $t=1$, which means that there is only one choice for $y_{t+1}$. At time steps $t=2$ and {$t=3$ this} set is a convex polytope, and the choice of $y_{t+1}$ is no longer unique, which means that other considerations (e.g.~a preference for holding one asset over another, or a preference not to trade) may be used to select $y_{t+1}$ {in $(y_t-\mathcal{K}_t)\cap\mathcal{W}_t$}. In this demonstration we adopted a minimum-trading rule, that is, whenever possible we selected $y_{t+1}=y_t$. {At the final time step $t=4$ we have
\[
y_4-\xi = \left(\begin{array}{d}0.641\\-0.491\\0.000\end{array}\right)
- \left(\begin{array}{d}1.000\\-1.000\\0.000\end{array}\right)
= \left(\begin{array}{d}-0.359\\0.509\\0.000\end{array}\right)
\in\mathcal{K}_4.
\]}
\begin{table}
\begin{center}
\begin{tabular}{|cc@{}c@{}c@{}c@{}|}
\hline
$t$ & $(j_1,j_2)$ & $y_t$ & $\mathcal{Z}_t$ & $(y_t-\mathcal{K}_t)\cap\mathcal{W}_t$ \\
\hline
0 & (1,1) & $\left(\begin{array}{d}0.000\\0.000\\7.418\end{array}\right)$ & \raisebox{-0.5\height}{\includegraphics[height=0.185\textheight]{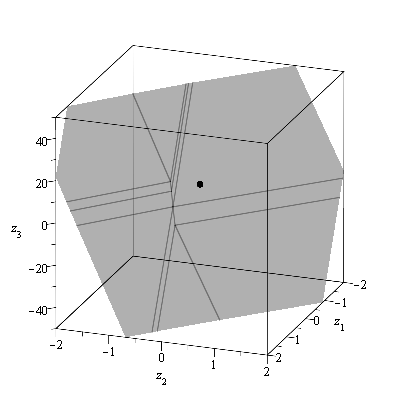}} & $\left\{\left(\begin{array}{d}0.498\\ -0.331\\ 0.000\end{array}\right)\right\}$ \\
1 & (2,1) & $\left(\begin{array}{d}0.498\\ -0.331\\ 0.000\end{array}\right)$ & \raisebox{-0.5\height}{\includegraphics[height=0.185\textheight]{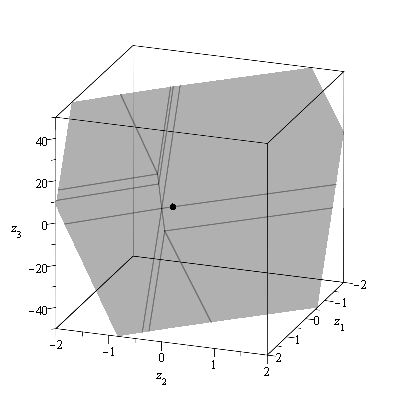}} & $\left\{\left(\begin{array}{d}0.641\\-0.491\\0.000\end{array}\right)\right\}$\\
2 & (2,1) & $\left(\begin{array}{d}0.641\\-0.491\\0.000\end{array}\right)$ & \raisebox{-0.5\height}{\includegraphics[height=0.185\textheight]{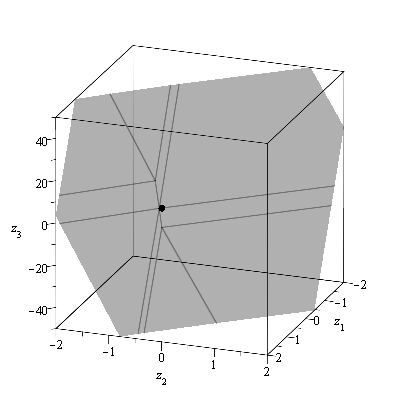}} & \raisebox{-0.5\height}{\includegraphics[height=0.185\textheight]{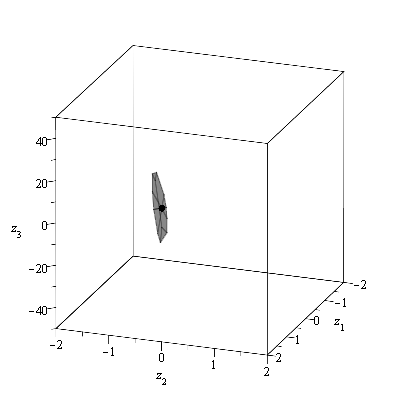}} \\
3 & (3,2) & $\left(\begin{array}{d}0.641\\-0.491\\0.000\end{array}\right)$ & \raisebox{-0.5\height}{\includegraphics[height=0.185\textheight]{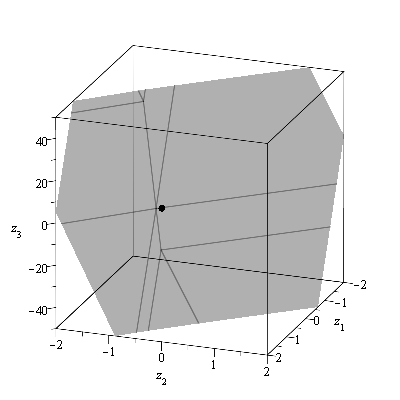}} & \raisebox{-0.5\height}{\includegraphics[height=0.185\textheight]{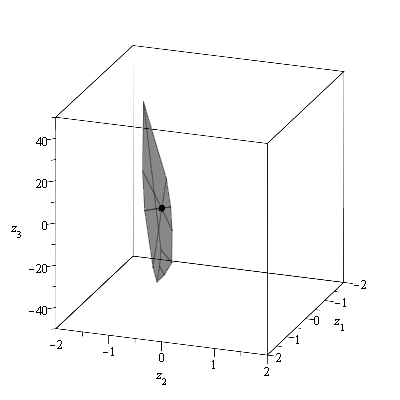}} \\
4 & (3,2) & $\left(\begin{array}{d}0.641\\-0.491\\0.000\end{array}\right)$ & \raisebox{-0.5\height}{\includegraphics[height=0.185\textheight]{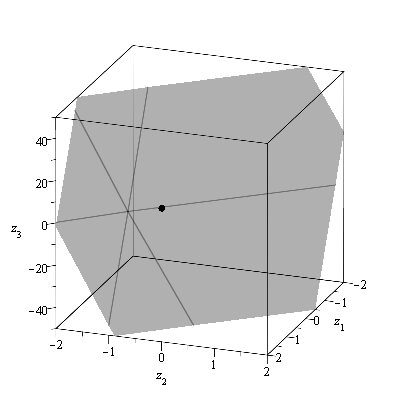}} & N/A \\
\hline
\end{tabular}
\caption{Superhedging {strategy} along a path}
\label{tab:1}
\end{center}
\end{table}

\section{Representation of superhedging price}

In this section we briefly present and compare the result of
\cite{Loehne_Rudloff2014} and \cite{Roux_Zastawniak2014} concerning the
representation of the superhedging price of a European option in terms of risk-neutral expectations of the payoff~$\xi$:%
\begin{equation}
\pi_{i}^{a}(\xi)=\sup_{(\mathbb{Q},S)\in\mathcal{P}^{i}}\mathbb{E}%
_{\mathbb{Q}}((\xi^{T}S_{T})), \label{Eq:price-repr}%
\end{equation}
where $\mathcal{P}^{i}$ is the set of pairs $(\mathbb{Q},S)$ consisting of a
probability measure~$\mathbb{Q}$ and an $\mathbb{R}^{d}$-valued martingale~$S$
under~$\mathbb{Q}$ satisfying the conditions of
Theorem~\ref{th:2012-10-03:ftap} and such that $S_{t}^{i}=1$ for each
$t=0,\ldots,T$.

In Theorem~6.1 of \cite{Loehne_Rudloff2014} this result was proved under the
so-called robust no-arbitrage condition of \cite{schachermayer2004} and
subject to the simplifying assumption that the solvency cone $\mathcal{K}_{t}$
contains no lines for any~$t$ (that is, the transaction costs are non-zero for
any~$t$). Their proof is based on the scalarisation procedure of
\cite{Hamel_Heyde2010} for the dual representation of the set $SHP_{0}$ of
superhedging portfolios.

By comparison, the result in \cite{Roux_Zastawniak2014} is free of these
restrictions: it works under the assumption that there is no arbitrage
opportunity as defined in Section~\ref{Sect:curr-mod}, which is weaker than
the robust no-arbitrage condition, and without the need to assume that the solvency cone
$\mathcal{K}_{t}$ contain no lines. It is also a much more general result that
applies to American type derivatives, which reduces to~(\ref{Eq:price-repr})
for European options. The proof is based on the dual construction from
Section~\ref{Sect:primal-dual-constr}, which can in fact be used to produce a
pair $(\mathbb{Q},S)$ that realises the supremum in~(\ref{Eq:price-repr})
(though in general such a pair does not lie in $\mathcal{P}^{i}$ as
$\mathbb{Q}$ may be a degenerate measure, absolutely continuous with respect
to but not necessarily equivalent to~$\mathbb{P}$).%

\section{Conclusions}

We have established a close link, indeed an equivalence between the three
approaches: the above primal and dual constructions and the SHP-algorithm of
\cite{Loehne_Rudloff2014}. The primal construction involves primal objects
only. The dual construction deals exclusively with dual objects (support
functions). Meanwhile, the SHP-algorithm {switches} back and forth
between primal and dual objects (in this case the lower images of the dual
problem~(\ref{eq:D*})). By Proposition~\ref{Prop1-ver2}, these two types of
dual objects are in one-to-one correspondence, which means that the apparent
differences between the algorithms are merely superficial.

Moreover, all three approaches lend themselves well to numerical
implementation: the primal and dual constructions utilise available software
libraries for handling convex sets, whereas the SHP-algorithm makes an
innovative use of Benson's procedure.
In both approaches the procedure limiting computational efficiency is vertex enumeration. An advantage offered by Benson's algorithm is the ability to control the accuracy versus efficiency by choosing an error level. On the other hand, the \emph{Maple} package \emph{Convex} used by \cite{Roux_Zastawniak2014} employs exact arithmetic with rational numbers, hence there is no rounding beyond the conversion (as accurate as one needs it to be) of input data from real to rational numbers. While accurate rational arithmetic carries obvious computational overheads, the primal and dual algorithms are efficient enough so this does not become a problem in realistic multi-step and multi-asset examples that have been investigated, where the computation times were of the order of a couple of minutes on a standard PC machine.

One major difference as compared with the SHP-algorithm approach is that the primal and dual constructions have been
developed in \cite{Roux_Zastawniak2014} for {the} much wider class of American
type options, {and can} handle early exercise problems. In this
context, European options are a particularly straightforward special case. It
remains an open question whether or not the SHP-algorithm of
\cite{Loehne_Rudloff2014} could be extended to American options, at least in
the case of hedging and pricing the seller's position. It would be exciting to
see this happen.

On the other hand, there are limits to what can be expected of the
SHP-algorithm. American options present a particular obstacle that this
approach is unlikely to be able to overcome. Namely, the case of hedging and
pricing the buyer's (rather than the seller's) position in an American option
leads to a non-convex optimisation problem, which is unlikely to yield to the
power of linear vector optimisation methods and geometric duality. For the
same reason, the dual construction collapses as there are no convex dual
objects to work with in the first place. {Nonetheless}, the primal construction can
still be adapted to handle this case; see Example~7.1 in \cite{Roux_Zastawniak2014} for details.

\end{document}